\documentclass[a4paper,11pt]{article}

\usepackage{preamble-fv-simple}
\usepackage{macros}

\usepackage{needspace}

\author[1]{Lars Jaffke} %%% add authors
\author[2]{Paloma T.\ Lima} %%% add authors
\author[3]{Roohani Sharma} %%% add authors

\affil[1]{University of Bergen, Bergen, Norway} %%% add affiliations
\affil[ ]{\texttt{lars.jaffke@uib.no}} %%% add email addresses
\affil[2]{IT University of Copenhagen, Copenhagen, Denmark} %%% add affiliations
\affil[ ]{\texttt{palt@itu.dk}} %%% add email addresses
\affil[3]{Max Planck Institute for Informatics, Saarland Informatics Campus, Saarbr{\"u}cken, Germany} %%% add affiliations
\affil[ ]{\texttt{rsharma@mpi-inf.mpg.de}} %%% add email addresses

\title{$b$-Coloring Parameterized by Pathwidth is XNLP-complete} %%% add title

\begin{document}

\maketitle

\begin{abstract}
    We show that the \bCol problem
	is complete for the class \XNLP when parameterized by the pathwidth 
    of the input graph.
	Besides determining the precise parameterized complexity of this problem,
	this implies that \bCol parameterized by pathwidth is $\W[t]$-hard for all~$t$,
	and resolves the
	parameterized complexity of \bCol parameterized by treewidth.
\end{abstract}

\section{Introduction and Definitions}
A \emph{$b$-coloring} of a graph $G$ is a proper vertex-coloring such that each color class has 
a vertex, called \emph{$b$-vertex}, that has a neighbor in each color class except its own.
This problem originated in the study of a heuristic for the \textsc{Graph Coloring} problem,
for more details see for instance~\cite{IrvingManlove99,JaffkeLL21,SilvaThesis}.
In this note we resolve the open question of the parameterized complexity of the \bCol 
problem parameterized by treewidth, 
which has been posed for instance in~\cite{JaffkeLL21,SilvaThesis}.
In particular, we show that already in the more restrictive parameterization by \emph{pathwidth},
the problem is \XNLP-complete, which implies it is $\W[t]$-hard for all $t$~\cite{XNLP-comp}.
To show hardness, we reduce from an orientation problem parameterized by pathwidth
that has recently been shown to be \XNLP-complete~\cite{BodlaenderCW22}.

\myparagraph{Basic notations and definitions.}
For two integers $a \le b$ we let $[a..b] = \{a, a+1, \ldots, b\}$,
and for a positive integer $a$, we let $[a] = [1..a]$.
All graphs considered here are finite and simple.
For an (undirected or directed) graph $G$, 
we denote its vertex set by $V(G)$ and its edge set by $E(G)$.
For an edge $\{u, v\} \in E(G)$, we use the shorthand ``$uv$''.
If $G$ is a directed graph, then denoting the edge $e = (u, v) \in E(G)$ by $uv$
also points to $e$ being directed from $u$ to $v$.
Given an undirected graph $G$, an \emph{orientation} of $G$ is a directed graph obtained 
from $G$ by replacing each edge $\{u, v\} \in E(G)$ by either $(u, v)$ or $(v, u)$.
A set $S \subseteq V(G)$ is \emph{independent} if for all pairs of distinct $u, v \in S$, 
$uv \notin E(G)$.
A \emph{star} is an undirected graph with one special vertex called the \emph{center}
that is adjacent to all of the remaining vertices, called \emph{leaves}, 
which form an independent set.

\fancyproblemdef
    {$b$-Coloring}
    {Undirected graph $G$, integer $k$}
    {Does $G$ have a $b$-coloring with $k$ colors?}

\begin{definition}
    Let $G$ be a graph.
    A \emph{path decomposition} of $G$ is a sequence $\calB$ = $B_1, \ldots, B_d$ 
    of subsets of $V(G)$ called \emph{bags} covering $V(G)$ such that:
    \begin{enumerate}
        \item For each edge $e \in E(G)$, there is some $i \in [d]$ such that $e \subseteq B_i$.
        \item For each $h, i, j \in [d]$ with $h < i < j$, $B_h \cap B_j \subseteq B_i$.
    \end{enumerate}
    The \emph{width} of $\calB$ is $\max_{i \in [d]} \card{B_i} - 1$,
    and the \emph{pathwidth} of $G$ is the smallest width of all its path decompositions.
\end{definition}

Membership in \XNLP will follow from the membership of \bCol parameterized by a linear width measure
with more expressive power than pathwidth,
namely a width measure equivalent to linear clique-width.
We define it next and show its relation to pathwidth.
\begin{definition}
    Let $G$ be a graph and $S \subseteq V(G)$.
    The \emph{module number} of $S$ is the number of equivalence classes of the equivalence relation $\sim_S$ defined as: $u \sim_S v \Leftrightarrow N(u) \cap (V(G) \setminus S) = N(v) \cap (V(G) \setminus S)$.
    Let $\pi = v_1, \ldots, v_n$ be a linear order of $V(G)$.
    The \emph{module-width} of $\pi$ is the maximum, over all $i$, of the module number of $\{v_1, \ldots, v_i\}$.
    The \emph{linear module-width} of $G$ is the minimum module-width over all its linear orders.
\end{definition}

\begin{lemma}\label{lem:pw:to:mw}
    Let $G$ be a graph and $\calB$ be a path decomposition of $G$ of width $w$.
    Then one can construct in polynomial time and logarithmic space a linear order of module-width at most $w + 2$.
\end{lemma}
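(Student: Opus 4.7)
The plan is to use the linear order $\pi = v_1, \ldots, v_n$ of $V(G)$ that sorts the vertices by first bag of appearance, i.e., by $f(v) := \min\{i : v \in B_i\}$, breaking ties by vertex identifier. I would then verify, for every prefix $S_i = \{v_1, \ldots, v_i\}$, that the number of $\sim_{S_i}$-classes is at most $w+2$, and separately describe a log-space construction of $\pi$.

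For the bound on the module number, fix $i$ and set $S = S_i$ and $k = f(v_i)$. Partition $S$ into the \emph{dead} vertices $D = \{v \in S : N(v) \cap (V(G) \setminus S) = \emptyset\}$ and the \emph{live} vertices $L = S \setminus D$. All vertices of $D$ share the empty outside-neighborhood and so collapse to a single $\sim_S$-class. I claim $L \subseteq B_k$: for $v \in L$, pick any neighbor $u \in V(G) \setminus S$. Since $u$ appears after $v_i$ in $\pi$, we have $f(u) \geq k$ (otherwise $u$ would already be in $S$). Taking a bag $B_j$ containing both $u$ and $v$ gives $j \geq f(u) \geq k$, while $f(v) \leq k$ because $v \in S$; the consecutivity of the bags containing $v$ then forces $v \in B_k$. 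Hence $|L| \leq |B_k| \leq w+1$, yielding at most $(w+1)+1 = w+2$ equivalence classes in total.

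For the log-space implementation, $f(v)$ can be computed for any $v$ by a single scan through $B_1, \ldots, B_d$ using one log-space counter. The order $\pi$ itself is produced by ranking: for each output position $i$, iterate over all vertices and emit the one of rank $i$ in the lexicographic order on the pair $(f(v), v)$. Both computations fit comfortably in $O(\log(n+d))$ space and hence also in polynomial time.

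The only substantive step is the inclusion $L \subseteq B_k$. Once one observes that every out-of-prefix neighbor $u$ of a vertex in $S$ satisfies $f(u) \geq k$, and combines this with $f(v) \leq k$ via the interval property of path decompositions, the rest is bookkeeping; no other step presents any real obstacle.
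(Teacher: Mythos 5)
Your proof is correct and follows the same approach as the paper: order vertices by first bag of appearance, then observe that all prefix vertices with a neighbor outside the prefix must lie in the leftmost bag of $v_i$ (giving at most $w+1$ such vertices, plus one class for the rest). The paper leaves the inclusion $L \subseteq B_k$ implicit as a "property of path decompositions," whereas you spell it out via the interval property; this is just a difference in the level of detail, not of approach.
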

\begin{proof}
    For each vertex $v$, let $B_v$ be the leftmost bag (i.e., the bag with the smallest index) 
    of $\calB$ containing $v$.
    Let $\pi = v_1, \ldots, v_n$ be a linear order of $V(G)$ such that the bags $B_{v_1}, \ldots, B_{v_n}$ appear in the same order as in $\calB$, with ties broken arbitrarily.
    Clearly, this order can be constructed within the claimed time and space bounds;
    we argue that it has module-width at most $w + 2$.
    By the properties of a path decomposition,
    for each $i \in [n]$, there are at most $w + 1$ vertices in $\{v_1, \ldots, v_i\}$ that have a neighbor in $\{v_{i+1}, \ldots, v_n\}$.
    Therefore the module-number of each such $\{v_1, \ldots, v_i\}$ can be at most $w+2$: $w+1$ for the aforementioned vertices and one for the vertices without neighbors in $\overline{V_i}$.
\end{proof}

\myparagraph{The class \XNLP.}
We assume familiarity with the basic technical notions of parameterized complexity and refer 
to~\cite{DowneyF99} for an overview.
The class \XNLP,
introduced as $N[f~\mathrm{poly}, f~\log]$ by Elberfeld et al.~\cite{ElberfeldST15},
consists of the parameterized decision problems that 
given an $n$-bit input with parameter $k$ can be
solved by a non-deterministic algorithm that simultaneously uses at most $f(k) n^c$ time and at most 
$f(k) \log n$ space,
where $f$ is a computable function and $c$ is a constant. 
We refer to~\cite{XNLP-comp,ElberfeldST15} for more details on this complexity class.
Hardness in \XNLP is transferred via \emph{parameterized logspace reductions}~\cite{ElberfeldST15}
which are parameterized reductions in the traditional sense~\cite{DowneyF99}
with the additional constraint of using only $f(k) + \calO(\log n)$ space,
where once again $k$ is the parameter of the problem and $n$ is the input size.

\section{The proof}
Adapting the \XP-algorithm for \bCol parameterized by module-width $w$~\cite{JaffkeLL21}
to a nondeterministic \FPT-time and $f(w)\log n$ space algorithm,
we can show that \bCol parameterized by linear module-width, and therefore by pathwidth, is in \XNLP.
This can be done similarly as in the case of \textsc{Graph Coloring} parameterized by linear clique-width as shown in~\cite{BodlaenderGJJL22}.
\begin{lemma}\label{lem:membership}
    \bCol parameterized by the module-width of a given linear order of the vertices of the input graph is in \XNLP.
\end{lemma}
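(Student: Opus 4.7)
The plan is to adapt the XP-algorithm for \bCol parameterized by module-width from~\cite{JaffkeLL21} into a nondeterministic left-to-right sweep through $v_1, \ldots, v_n$. At step $i$, the prefix $V_i$ admits at most $w$ equivalence classes of $\sim_{V_i}$, and vertices in the same class have identical neighborhoods outside $V_i$, so per-vertex data can be aggregated into per-class data. We nondeterministically guess the color $c_i \in [k]$ of $v_i$ and, optionally, whether $v_i$ is the b-vertex of $c_i$; the full coloring and b-vertex assignment are thus built up one vertex at a time.

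For verifying propriety, the state records, for each signature $S \subseteq [w]$, the number $a_S$ of colors whose set of containing classes in $\mathcal{M}_i$ is exactly $S$. To process $v_{i+1}$ we guess the old signature $S$ of its color, verify that $S$ is disjoint from the set of classes adjacent to $v_{i+1}$ (which is readable from $G$), and update the $a_S$ counts to reflect the new signature after $v_{i+1}$ joins its class. The class partition itself evolves from $\mathcal{M}_i$ to $\mathcal{M}_{i+1}$ via at most one merge (of two classes whose outside-neighborhoods coincide once $v_{i+1}$ is removed) together with the insertion of $v_{i+1}$. This part mirrors the XNLP algorithm for Graph Coloring parameterized by linear clique-width in~\cite{BodlaenderGJJL22} and uses only $2^{O(w)} \log n$ bits.

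For the b-vertex condition we extend each color's type with a b-vertex status: either ``unplaced'', or ``placed in class $M$ with residual-demand signature $\rho \subseteq [w]$'', where $\rho$ summarises which classes still owe $b_c$ a color it has not yet seen. Aggregating colors by the pair (signature, b-vertex status) keeps the full state in $f(w) \log n$ bits. Whenever a new vertex $v_{i+1}$ is introduced in a class $M'$ with color $c_{i+1}$, every placed b-vertex whose current class is adjacent to $M'$ may have $c_{i+1}$ struck from its residual demand, and we update the counts per type accordingly. At the end of the sweep we accept iff $k$ colors are used and every color has a b-vertex with empty residual demand.

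The main obstacle is designing the residual-demand encoding so that it is both locally updatable and strong enough to certify the b-vertex property at the end. A naive bookkeeping that remembers, per b-vertex, the subset of $[k]$ of colors it still needs uses $\Theta(k)$ bits per b-vertex, which is far too much. The crux is that once $b_c$ is placed, its future neighbors are precisely the outside-neighborhood of its class at the moment of placement, so aggregating still-needed colors by their class-signatures (rather than tracking each color individually) suffices, staying within $2^{O(w)}$ types per b-vertex. I expect the delicate bookkeeping to lie in how $\rho$ transforms under the class merges between $\mathcal{M}_i$ and $\mathcal{M}_{i+1}$; once that transformation is specified, the algorithm runs in nondeterministic $f(w) n^{O(1)}$ time and $f(w) \log n$ space, as required.
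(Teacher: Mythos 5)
The paper itself gives no proof of \cref{lem:membership}; it only states that the XP algorithm of Jaffke, Lima and Lokshtanov can be adapted to the nondeterministic $f(w)\log n$-space setting ``similarly as'' the Graph Coloring algorithm of Bodlaender et al. Your high-level plan (a left-to-right nondeterministic sweep, guessing colors and $b$-vertex designations, and aggregating per-color information by type) matches that sketch, and your handling of propriety via signature counts is the standard and correct part.

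However, the $b$-vertex bookkeeping you propose is under-specified in a way I do not think can be repaired as stated. You attach to each placed $b$-vertex $b_c$ a ``residual-demand signature'' $\rho$ obtained by \emph{aggregating the still-needed colors by their current class-signature}. This aggregation is lossy. Consider two $b$-vertices $b_c, b_{c'}$ that end up in the same equivalence class $M$ (so they have identical future neighborhoods), where $b_c$ still needs only color $d_1$ and $b_{c'}$ still needs only color $d_2$, with $d_1, d_2$ currently having the same signature $S$. Both $b$-vertices then carry $\rho = \{S\}$. When the next neighbor of $M$ is colored $d_1$, the only type-level update available is to clear $S$ from $\rho$ for every $b$-vertex in $M$ of that type, which wrongly declares $b_{c'}$ satisfied even though it still needs $d_2$; the state can no longer distinguish the two $b$-vertices, and the sweep may accept a coloring that is not a $b$-coloring. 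You flag exactly this transformation as ``the delicate bookkeeping,'' but the problem is not that the transformation is delicate: the chosen abstraction simply does not carry enough information. The fix is to index the demand information by \emph{color} rather than by $b$-vertex: for each color $d$, maintain $T_d \subseteq [w]$, the set of classes that currently contain a placed $b$-vertex which has not yet seen $d$. When $v_{i+1}$ gets color $d$, one removes from $T_d$ all classes adjacent to $v_{i+1}$; when $v_{i+1}$ is designated $b_c$ and lands in class $M$, one adds $M$ to $T_d$ for exactly those $d\neq c$ whose signature is disjoint from the classes adjacent to $v_{i+1}$; class merges act coordinatewise on all $S_d$ and $T_d$. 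A color's type is then $(S_d, T_d, \text{placed}_d)$, still $2^{O(w)}$ many, and the final acceptance test is that every color is placed and $T_d=\emptyset$. This per-color view avoids the counterexample above (the update removes $M$ from $T_{d_1}$ but leaves $M \in T_{d_2}$), and it is the kind of aggregation one needs to make the sketch go through.
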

We start from the following problem in our reduction
which is known to be \XNLP-complete
when parameterized by the width of a given path decomposition of the input graph~\cite{BodlaenderCW22}.
\fancyproblemdef
	{\CircOri}
	{Undirected graph $G$ with edge weights $\weight \colon E(G) \to \bN$ given in unary.}
	{Is there an orientation $\ori{G}$ of $G$ such that for each $v \in V(G)$: 
	$$\sum_{vx \in E(\ori{G})} \weight(vx) = \sum_{xv \in E(\ori{G})} \weight(xv)$$\vspace*{-.5cm}}

\begin{theorem}
	\bCol parameterized by the width of a given path decomposition of the input graph is \XNLP-complete.
\end{theorem}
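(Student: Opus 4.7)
The plan is as follows. Membership in \XNLP\ is immediate from the preceding lemmas: by Lemma~\ref{lem:pw:to:mw} a given path decomposition of width $p$ yields, in logspace, a linear order of module-width at most $p+2$, to which the nondeterministic algorithm of Lemma~\ref{lem:membership} applies.

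For the hardness direction, I would give a parameterized logspace reduction from \CircOri\ parameterized by the width of a given path decomposition, which is \XNLP-hard by~\cite{BodlaenderCW22}. Given an instance $(G, \weight)$ with a width-$p$ path decomposition $\calB = B_1, \ldots, B_d$, the goal is to construct a graph $H$, an integer $k$, and a path decomposition of $H$ of width $f(p)$ such that $G$ has a balanced orientation iff $H$ admits a $b$-coloring with $k$ colors. I would structure the construction in three layers: (i)~a small \emph{palette} of $k$ colors appearing in every bag and equipped with a canonical $b$-vertex for each color, forcing every valid $b$-coloring to actually use all $k$ colors in the intended way; (ii)~for each $v \in V(G)$ a \emph{counter} gadget exposing two distinguished color classes whose $b$-vertices live inside the gadget and whose neighborhood requirements encode the equality $\sum_{vx \in E(\ori{G})} \weight(vx) = \sum_{xv \in E(\ori{G})} \weight(xv)$; and (iii)~for each edge $uv$ of weight $\omega$ an \emph{orientation} gadget of size linear in $\omega$ (affordable because weights are in unary) whose proper colorings fall into two types, one per orientation of $uv$, each type activating exactly $\omega$ slots on the ``in'' side of one endpoint's counter and $\omega$ slots on the ``out'' side of the other.

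The correctness argument then splits into the two standard directions. From a balanced orientation of $G$, I would color each orientation gadget according to its chosen direction and check that every palette and counter color has a $b$-vertex. Conversely, from a $b$-coloring of $H$ I would read off the orientation of each edge from the colors of its gadget and verify that the $b$-vertex requirements at the counter gadgets translate into the balance condition at every vertex of $G$. For the path decomposition of $H$, I would expand each bag $B_i$ of $\calB$ into a short sequence of bags that all contain the palette and the counter gadgets of vertices in $B_i$, introducing the vertices of each orientation gadget for an edge inside $B_i$ one at a time so that each resulting bag carries only $O(1)$ orientation-gadget vertices on top of the $O(p)$-sized skeleton. Logspace is maintained by streaming through $\calB$ with $O(\log n)$ auxiliary memory.

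I expect the main obstacle to be designing the counter and orientation gadgets so that the intended correspondence between $b$-colorings and orientations is enforced \emph{rigidly}: one must rule out spurious $b$-colorings whose $b$-vertices sit at unintended positions and thereby bypass the balance constraint. This typically calls for anchoring each critical $b$-vertex to a unique location by means of blocker cliques, twin vertices, or pre-committed colors, so that no color class has an ``alternative'' $b$-vertex. Once this rigidity is achieved, the pathwidth bound and the logspace construction follow routinely from the locality of the gadgets along $\calB$.
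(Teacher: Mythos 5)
Your proposal correctly identifies the same source problem (\CircOri parameterized by pathwidth), the same membership argument via Lemmas~\ref{lem:pw:to:mw} and~\ref{lem:membership}, and the same overall gadget architecture (a global ``palette'' structure, a per-vertex gadget, a per-edge orientation gadget, a bag-by-bag pathwidth bound). So the high-level route matches the paper.

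However, there is a genuine gap: the proposal stops exactly where the real work begins. You explicitly defer ``designing the counter and orientation gadgets so that the intended correspondence between $b$-colorings and orientations is enforced rigidly,'' yet this is the entire substance of the hardness proof, and it is not routine. The paper's construction hinges on a rather delicate degree-counting argument that your plan does not capture. Concretely, the paper chooses $k = 2\bfW + 3m + n + 2$, introduces $2\bfW + 2$ stars (one ``superstar'' whose leaves are partitioned into sets $L_{e,v}$ of size $\weight(e)$, the rest ``anonymous'' stars soaking up spare colors), and gives each $v \in V(G)$ a private independent set $P_v$ of size $k - \tfrac{3}{2}W_v - 1$. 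This forces $\deg_H(v) = k + \tfrac{1}{2}W_v$, so that $v$ already sees $W_v$ colors on $\bigcup_e L_{e,v}$ via the superstar and therefore needs exactly $\tfrac{1}{2}W_v$ additional colors from the edge gadgets' $Y_e$ sets; the orientation of each edge $e = uv$ is read off from which of $x_{e,u}$, $x_{e,v}$ is a $b$-vertex, which in turn determines whether $Y_e$ is colored with $C_{e,u}$ or $C_{e,v}$. The rigidity you correctly anticipate as the obstacle is handled not by ``blocker cliques'' or ``twin vertices'' but by a counting argument: there are exactly $k + m$ vertices of degree $\ge k - 1$, each edge gadget can host at most three $b$-vertices (because if both $x_{e,u}$ and $x_{e,v}$ were $b$-vertices with distinct colors, then $q_{e,1}$ and $q_{e,2}$ each see repeated colors and cannot be $b$-vertices), and the superstar $s^\star$ having degree exactly $k - 1$ forbids any color repetition among the $L_{e,v}$ sets. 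Your proposal also describes per-vertex gadgets with ``two distinguished color classes'' encoding an in/out balance, whereas the paper needs only one distinguished color per vertex gadget and tracks only the ``in'' side; whether a two-sided counter design would work at all without blowing up the pathwidth is not clear from what you wrote. In short: the approach is the right one, but the proposal is a plan for a proof rather than a proof, and the parts it omits are precisely the parts that are hard.
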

\begin{proof}
    Membership follows from \cref{lem:pw:to:mw,lem:membership}.
    To show \XNLP-hardness,
    we give a parameterized logspace-reduction from 
	the \CircOri problem parameterized by the width of a given path decomposition of the input graph,
	which was 
	shown to be \XNLP-complete in~\cite{BodlaenderCW22}.
	Let $(G, \weight)$ be an instance of \CircOri, given with a path decomposition $\calB$ of $G$.
	We let $n = \card{V(G)}$, $m = \card{E(G)}$,
	and $\bfW = \sum_{e \in E(G)} \weight(e)$.
	For each vertex $v \in V(G)$, we let $W_v = \sum_{uv \in E(G)} \weight(uv)$.
	We may assume that $G$ is connected and 
	that for all $e \in E(G)$, $\weight(e) \ge 1$;
	therefore $\bfW \ge m \ge n-1$.
	
	We construct an equivalent instance $(H, k)$ of \bCol.
	We let 
	\begin{align}
		k = 2\bfW + 3m + n + 2.
	\end{align}
	We begin the construction of $H$ 
	which is illustrated in \cref{fig:sketch}
	by adding $2\bfW + 2$ disjoint copies of a star with $k-1$ leaves.
	Let $S^\star$ be one of these stars. We denote its center by $s^\star$ and 
	refer to it throughout the proof as the \emph{superstar}.
	The remaining ones are referred to as \emph{anonymous}.
	We partition a subset of the leaves of $S^\star$ into $\calL = \{L_{e,v} \mid e \in E(G), v \in e\}$
	where for all $e \in E(G)$ and $v \in e$, $\card{L_{e,v}} = \weight(e)$. 
	Note that this is possible since $k-1 \ge 2\bfW $.
	
	\myparagraph{Vertex gadget.}
	For each $v \in V(G)$,
	we add $v$, as well as 
	a set $P_v$ of $k - \frac{3}{2}W_v - 1$ independent vertices 
	to $H$.
	We add all edges between $v$ and $P_v$, 
	Furthermore, for each edge $e \in E(G)$ such that $v \in e$,
	we connect $v$ and the vertices in $L_{e,v}$ in $H$.
	
	\myparagraph{Edge gadget.}
	For each $e = uv \in E(G)$, 
	we add the following gadget to $H$.
	First, it has two vertices $x_{e,u}$ and $x_{e,v}$, 
	a set $Y_e$ of $\weight(e)$ vertices,
	and a set $Z_e$ of $k - 2\weight(e) - 3$ vertices.
	The vertex $x_{e,u}$ is adjacent to $Y_e \cup Z_e \cup L_{e,u}$, 
	and $x_{e,v}$ is adjacent to $Y_e \cup Z_e \cup L_{e,v}$.
	We make $u$ and $v$ adjacent to~$Y_e$.
	We furthermore add two new vertices $q_{e, 1}$ and $q_{e, 2}$ to $H$ that are connected by an edge,
	as well as all edges between $q_{e, h}$ and $Z_e \cup L_{e, u} \cup L_{e, v} \cup \{x_{e, u}, x_{e, v}\}$
	for all $h \in [2]$.
	We let $X = \{x_{e, u}, x_{e, v} \mid e = uv \in E(G)\}$,
	and $\calQ = \{q_{e, 1}, q_{e, 2} \mid e \in E(G)\}$. 
	
	\begin{figure}
	    \centering
	    \includegraphics[height=.3\textheight]{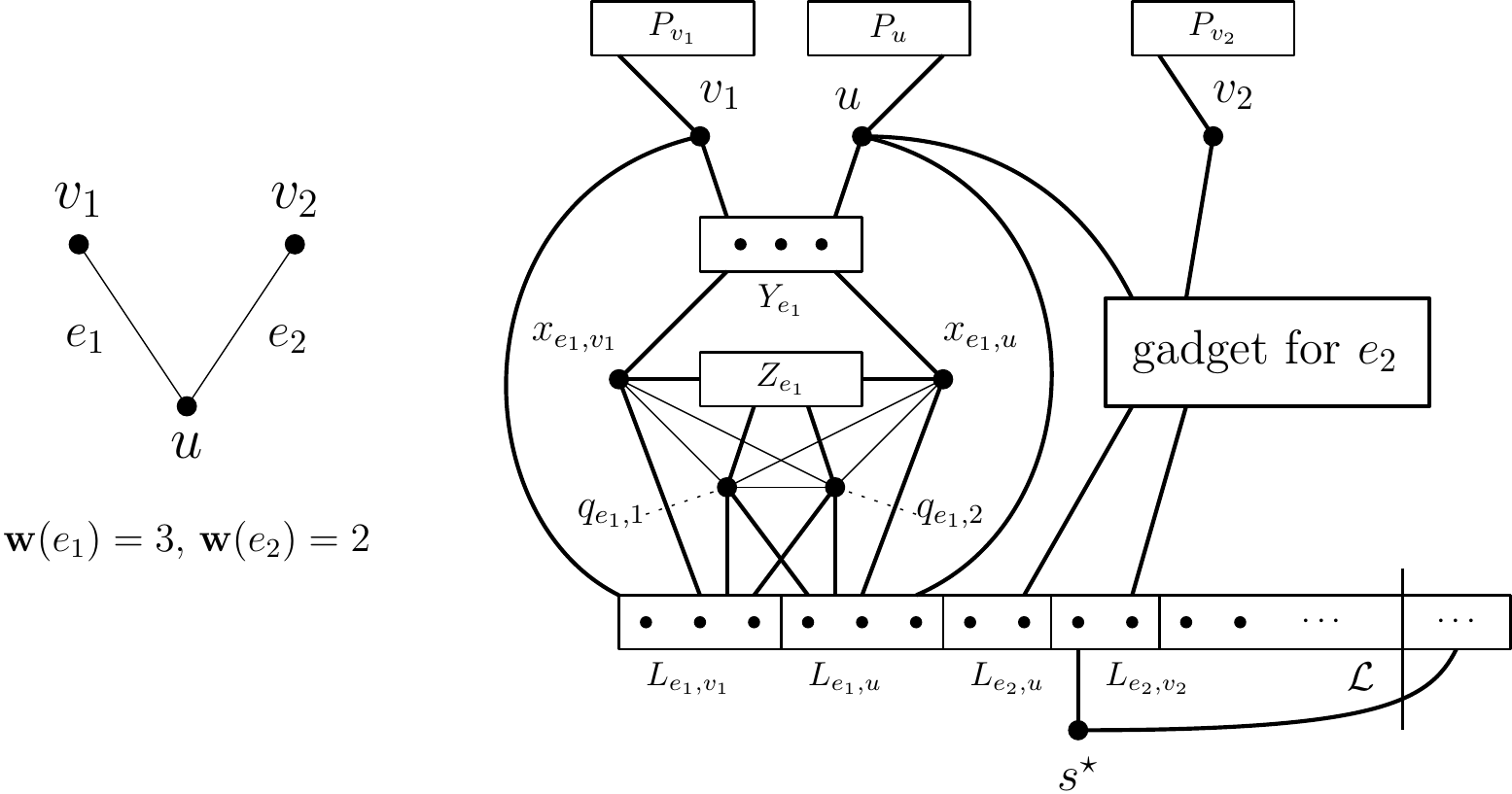}
	    \caption{Sketch of the main part of the reduction. 
	        Bold edges mean that all edges between the corresponding sets are present. All vertex sets represented by single boxes are independent.
	        Note that $\card{L_{e, v_1}} = \card{L_{e, u}} = \card{Y_{e_1}} = \weight(e_1) = 3$ and recall that $\card{Z_{e_1}} = k - 2\weight(e_1) - 3$.}
	    \label{fig:sketch}
	\end{figure}
	
	\medskip
	\noindent%
	Adding all vertex and edge gadgets finishes the construction of $H$,
	which can be performed using only logarithmic space. 
	
	\needspace{3\baselineskip}
	\begin{nestedclaim}
		If $(G, \weight)$ has a circulating orientation,
		then $H$ has a $b$-coloring with $k$ colors.
	\end{nestedclaim}
	\begin{claimproof}
		Let $\ori{G}$ be the circulating orientation of $(G, \weight)$.
		We give a coloring of the vertices of $H$ with colors $[0..(k-1)]$. 
		To do so, we identify some important subsets of $[0..(k-1)]$ whose $b$-vertices will appear
		in targeted regions of $H$.
		First, we let 
		$V(G) = \{v_1, \ldots, v_n\}$ and 
		$E(G) = \{e_1, \ldots, e_m\}$.
		We construct a proper coloring of $H$ such that once the coloring is completed,
		the following hold.
		\begin{enumerate}
			\item\label{enum:COtoCOL:1}
			The vertex $s^\star$ (the center of the superstar) is a $b$-vertex of color $0$.
			\item\label{enum:COtoCOL:2} 
			For each $i \in [n]$, $v_i$ is a $b$-vertex of color $i$.
			\item\label{enum:COtoCOL:3} 
			For each $i \in [m]$, $q_{e_i, 1}$ is a $b$-vertex of color $n+i$, 
			    and $q_{e_i, 2}$ is a $b$-vertex of color $m+n+i$.
			\item\label{enum:COtoCOL:4} 
			For each $i \in [m]$, either $x_{e_i, u}$ or $x_{e_i, v}$,
				where $e_i = uv$, is a $b$-vertex of color $2m+n+i$.
			\item\label{enum:COtoCOL:5} 
			Each of the remaining $k - (3m+n+1) = 2\bfW + 1$ colors has a $b$-vertex that is a center 
				of an anomymous star.
		\end{enumerate}
		
		Let $S_1, \ldots, S_{2\bfW+1}$ be the anonymous stars with centers $s_1, \ldots, s_{2\bfW+1}$, respectively.
		For each $i \in [2\bfW+1]$, we assign $s_i$ the color $3m+n+i$, 
		and the leaves of $S_i$ the colors $[0..(k-1)] \setminus \{3m+n+i\}$. This satisfies~\cref{enum:COtoCOL:5}.
		We assign $s^\star$ the color $0$
		and its leaves the colors $[k-1]$ in such a way that 
		colors $[(3m+n+1)..(3m+n+2\bfW)]$ appear on the vertices in $\calL$.
		This satisfies~\cref{enum:COtoCOL:1}.
		For each $i \in [n]$, we assign $v_i$ color $i$.
		For each $i \in [m]$ and each $v \in e_i$, we let $C_{e_i, v}$ be the colors appearing on $L_{e_i, v}$,
		and we assign $x_{e_i, v}$ the color $2m+n+i$.
		
		We now color the edge gadgets.
		Let $i \in [m]$ and $e_i = uv$.
		We give $q_{e_i, 1}$ color $n+i$ and $q_{e_i, 2}$ color $m+n+i$.
    	We assign the vertices in $Z_{e_i}$ the colors 
    	$$[0..(k-1)] \setminus (C_{e_i, u} \cup C_{e_i, v} \cup \{n+i, m+n+i, 2m+n+i\}).$$
		If $e_i$ is directed from $u$ to $v$ in $\ori{G}$,
		then we repeat colors $C_{e_i, u}$ on $Y_{e_i}$.
		Observe that this makes $x_{e_i, v}$ a $b$-vertex for color $2m+n+i$:
		it sees colors $C_{e_i, v}$ on $L_{e_i, v}$, colors $C_{e_i, u}$ on $Y_{e_i}$,
		and the remaining colors other than its own on $Y_{e_i} \cup \{q_{e_i, 1}, q_{e_i, 2}\}$.
		Moreover, $q_{e_i, 1}$ is a $b$-vertex for color $n+i$,
		since it sees color $m+n+i$ on $q_{e_2, i}$,
		color $2m+n+i$ on $x_{e_i, v}$,
		and the remaining colors on $L_{e, u} \cup L_{e, v} \cup Z_e$.
		Similarly, $q_{e_i, 2}$ is a $b$-vertex for color $m+n+i$.
		Once this is done for all $i$,~\cref{enum:COtoCOL:3,enum:COtoCOL:4} are satisfied.

		We now color the vertex gadgets.
		We first argue that each
		$v \in V(G)$ already sees precisely $\frac{3}{2}W_{v}$ colors in its neighborhood.
		This is because $v_i$ sees
		$W_{v}$ colors on $\bigcup_{e \in E(G), v \in e} L_{e, v}$,
		and for each edge $e$ that is directed towards $v$,
		there are $\weight(e)$ additional colors appearing in the neighborhood of $v$;
		concretely, on the set $Y_e$ of the corresponding edge gadget.
		Since $\ori{G}$ is circulating, 
		the latter contribute with an additional $\frac{1}{2}W_v$ colors in total.
		Therefore, we can distribute the remaining $k-\frac{3}{2}W_v-1$ colors 
		on the set $P_v$, which makes $v$ a $b$-vertex.
		This satisfies~\cref{enum:COtoCOL:2},
		and we have arrived at a $b$-coloring of $H$ with $k$ colors.
	\end{claimproof}
	
	We now work towards the reverse implication of the correctness proof.
	We start with a claim regarding the location of the $b$-vertices 
	in any $b$-coloring of $H$ with $k$ colors.
	Throughout the following, we denote by $A$ the set of centers of the anonymous stars.
	\begin{nestedclaim}\label{claim:b-vertices}
	    Each $b$-coloring of $H$ with $k$ colors has precisely one $b$-vertex per color.
	    Moreover, the $b$-vertices are $\{s^\star\} \cup V(G) \cup \calQ \cup A$,
	    and for each $e = uv \in E(G)$, 
	    precisely one of $x_{e, u}$ and $x_{e, v}$.
	\end{nestedclaim}
	\begin{claimproof}
	    The only vertices with high enough degree (at least $k-1$) to become $b$-vertices in such a coloring of $H$
	    are in $\{s^\star\} \cup V(G) \cup \calQ \cup A \cup X$.
	    Note that this set has size $2\bfW + 4m + n + 2 = k + m$.
	    
	    We argue that the gadget of each edge $e = uv$ can contain at most three $b$-vertices.
	    Note that only four of its vertices, $x_{e, u}$, $x_{e, v}$, $q_{e, 1}$, and $q_{e, 2}$
	    have high enough degree to be $b$-vertices.
	    Suppose for a contradiction that $x_{e, u}$ and $x_{e, v}$ are $b$-vertices for colors 
	    $c_u$ and $c_v$, respectively, where $c_u \neq c_v$.
	    For $x_{e, u}$ to be a $b$-vertex of color $c_u$, 
	    it needs to have a neighbor colored $c_v$.
	    By the structure of $H$, this vertex has to be contained in $L_{e, u}$.
	    Similarly, we can conclude that $L_{e, v}$ contains a vertex colored $c_u$.
	    But this means that both $q_{e, 1}$ and $q_{e, 2}$
	    have two neighbors colored $c_u$ and two neighbors colored $c_v$.
	    Since $\deg_H(q_{e, h}) = 2\weight(e) + \card{Z_e} + 3 = k$ for all $h \in [2]$,
	    this means that each of these vertices sees at most $k - 2$ colors in its neighborhood,
	    so neither of them is a $b$-vertex.
	    Therefore we can assume from now on that $x_{e, u}$ and $x_{e, v}$ receive the same color.
	    
	    Since we only have $k + m$ vertices of high enough degree to be $b$-vertices,
	    we can only have enough $b$-vertices if each edge gadget has exactly
	    three $b$-vertices, and if all vertices in $\{s^\star\} \cup V(G) \cup A$ are $b$-vertices.
        Now suppose that for some edge $e = uv \in E(G)$, both $x_{e, u}$ and $x_{e, v}$
	    are $b$-vertices for their color.
	    By the structure of $H$, 
	    this implies that the same colors have to appear on $L_{e, u}$ and $L_{e, v}$.
	    But $s^\star$ needs to be a $b$-vertex, now there are $\weight(e) \ge 1$ colors
	    in its neighborhood that repeat. Since $\deg_H(s^\star) = k - 1$,
	    this is not possible. 
	    This yields the claim.
	\end{claimproof}
	
	Throughout the following, we assume that we have a $b$-coloring of $H$ with $k$ colors.
	Again, for each $e \in E(G)$ and $v \in e$,
	we denote by $C_{e, v}$ the set of colors appearing on the vertices $L_{e, v}$.
	We prove an auxiliary claim.
	Note that the second part was already observed in the proof of \cref{claim:b-vertices},
	but we restate and argue it here for easy reference.

	\begin{nestedclaim}\label{claim:aux}
	    ~
	    \begin{enumerate}
	        \item\label{claim:aux:disjoint}
	        For each $e, e' \in E(G)$ and $v \in e$, $v' \in e'$,
	            if $(e, v) \neq (e', v')$, then $C_{e, v} \cap C_{e', v'} = \emptyset$.
	            
	        \item\label{claim:aux:orient}
	        For each $e = uv \in E(G)$, either colors $C_{e, u}$ or colors $C_{e, v}$ appear on $Y_e$;
	        the former if $x_{e, v}$ is a $b$-vertex and the latter if $x_{e, u}$ is a $b$-vertex.
	    \end{enumerate}
	\end{nestedclaim}
	\begin{claimproof}
	    \cref{claim:aux:disjoint}.
	    By \cref{claim:b-vertices}, we know $s^\star$ is a $b$-vertex. Since its degree is $k-1$,
	    all its neighbors must receive distinct colors.
	    Hence \cref{claim:aux:disjoint} follows.  
	    
	    \cref{claim:aux:orient}. By \cref{claim:b-vertices}, either $x_{e, v}$ or $x_{e, u}$ is a $b$-vertex for its color.
	    Suppose that $x_{e, v}$ is a $b$-vertex (the other case is analogous).
	    For $x_{e, v}$ to be a $b$-vertex, the colors $C_{e, u}$ have to appear
	    in its neighborhood. 
	    By \cref{claim:aux}\cref{claim:aux:disjoint}, we have that $C_{e,u}\cap C_{e,v}=\emptyset$. 
	    By \cref{claim:b-vertices},
	    $q_{e,1}$ is a $b$-vertex for its color. Since the degree of $q_{e,1}$ is $k-1$ and $Z_e\cup L_{e,u}\subset N(q_{e,1})$, we have that no color of $C_{e,u}$ appears in $Z_e$. Hence, the colors of $C_{e,u}$ must appear in $Y_e$.
	\end{claimproof}

	We now construct an orientation $\ori{G}$ of $G$.
	For each edge $e = uv \in E(G)$, if $x_{e, u}$ is a $b$-vertex, 
	then we orient $e$ towards $u$, 
	and if $x_{e, v}$ is a $b$-vertex, we orient $e$ towards $v$.
	Note that by \cref{claim:b-vertices}, this is well-defined.
	Throughout the following whenever we write ``$uv$'' 
	for an edge in $\ori{G}$, we mean that 
	the edge $uv$ is directed from $u$ to $v$ in $\ori{G}$.
	The next claim completes the correctness proof of the reduction.
	
	\begin{nestedclaim}
	    For each $v \in V(G)$, $\sum_{uv \in E(\ori{G})} \weight(uv) = \frac{1}{2}W_v$.
	\end{nestedclaim}
	\begin{claimproof}
	    We first show that $\sum_{uv \in E(\ori{G})} \weight(uv) \ge \frac{1}{2}W_v$.
	    By \cref{claim:b-vertices}, $v$ is a $b$-vertex.
	    Moreover, $\deg_H(v) = k + \frac{1}{2}W_v$ since 
	    $v$ has $k - \frac{3}{2}W_v - 1$ neighbors in $P_v$,
	    $W_v$ additional neighbors in the edge gadgets, 
	    $W_v$ additional neighbours in $\calL$',
	    and no other neighbors.
	    This means that for $v$ to be a $b$-vertex, 
	    $v$ needs to see at least $\frac{1}{2}W_v$ colors
	    in $\bigcup_{e \in E(G), v \in e} Y_e$.
	    \cref{claim:aux} then implies that 
	    there is a set of edges $\{e_1, \ldots, e_d\}$ incident with $v$ 
	    and with $\sum_{i \in [d]} \weight(e_i) \ge \frac{1}{2}W_v$
	    such that for all $i \in [d]$, $x_{e_i, v}$ is a $b$-vertex.
	    This implies the inequality by our construction of $\ori{G}$.
	    
	    Now we show that $\sum_{uv \in E(\ori{G})} \weight(uv) \le \frac{1}{2}W_v$.
	    Let $\calY = \bigcup_{e \in E(G)} Y_e$,
	    note that $\card{\calY} = \bfW$,
	    and that to make each $v \in V(G)$ a $b$-vertex,
	    $\frac{1}{2}W_v$ colors must appear in $N_H(v) \cap \calY$ 
	    that are not in $N_H(v) \setminus \calY$. 
	    Moreover, for each $e = uv \in E(G)$, 
	    $Y_e$ has colors that appear in $N_H(u) \setminus \calY$ but not in $N_H(v) \setminus \calY$ 
	    or vice versa by \cref{claim:aux}.
	    Since $\bfW = \sum_{e \in E(G)} \weight(e) = \sum_{v \in V(G)} \frac{1}{2} W_v$,
	    we can conclude that if for some $v \in V(G)$, 
	    $\sum_{uv \in E(\ori{G})} \weight(uv) > \frac{1}{2}W_v$,
	    then there is another $v' \in V(G) \setminus \{v\}$ with
	    $\sum_{uv' \in E(\ori{G})} \weight(uv') < \frac{1}{2}W_{v'}$,
	    contradicting the previous paragraph.
	\end{claimproof}
	
	\begin{nestedclaim}
	    Given a path decomposition of~$G$ of width~$w$,
	    one can construct a path decomposition of~$H$
    	of width at most~$w + 5$
    	in polynomial time and logarithmic space.
	\end{nestedclaim}
	\begin{claimproof}
	    Let $\calB$ be a path decomposition of $G$ of width $w$.
	    We add $s^\star$ to all bags of $\calB$.
	    For each vertex $v \in V(G)$, let $B_v \in \calB$ be a bag containing $v$.
	    We insert a sequence of $\card{P_v}$ bags after $B_v$
	    containing $B_v$, and a unique vertex of $P_v$.
	    For each edge $e = uv \in E(G)$, let $B_e$ be a bag in $\calB$ containing $u$ and $v$.
	    We insert a sequence of $\card{Y_e \cup Z_e \cup L_{e, u} \cup L_{e, v}}$ bags after $B_e$
	    containing $B_e$, $x_{e, u}$, $x_{e, v}$, $q_{e, 1}$, $q_{e, 2}$,
	    and a unique vertex of $L_{e, u} \cup L_{e, v} \cup Y_e \cup Z_e$.
	    Finally, we append a sequence of bags forming a width-$1$ path decomposition of the anonymous stars.
	    Note that this gives a path decomposition of $H$
	    and there is no bag to which we added more than five vertices.
	    It is easy to see that these operations can be performed 
	    within the claimed time and space requirements.
	\end{claimproof}
	
	This concludes the proof of the theorem.
\end{proof}

\bibliographystyle{plain}
\bibliography{references}

\end{document}